%
%
%
%
%
%
%
\documentclass[%
reprint,
nofootinbib,
amsmath,amssymb,
aps,
]{revtex4-1}
\pdfoutput=1
\usepackage[utf8]{inputenc}
\usepackage[english]{babel}
\usepackage[T1]{fontenc}
\usepackage{amsmath}
\usepackage{hyperref}

\usepackage{tikz}
\usepackage{lipsum}

\usepackage{dsfont}
\usepackage[normalem]{ulem}
\usepackage{mathtools}
\usepackage[makeroom]{cancel}
\usepackage{bbm}
\usepackage{enumitem}

\usepackage{color}
\synctex=1
\usepackage{graphicx}
\usepackage{amsthm}



\definecolor{martin}{rgb}{0,.4,1}

\definecolor{henrik}{rgb}{1,.4,0}



\newcommand{\mb}[1]{\mathbb{#1}}



\newcommand{\id}{\mathbbm{1}}



\newcommand{\R}{\mb{R}}





%


\newcommand{\rom}[1]{\uppercase\expandafter{\romannumeral #1\relax}}

\usepackage{amssymb}             

\newtheorem{lemma}{Lemma}
\newtheorem{corollary}{Corollary}
\newtheorem{theorem}{Theorem}
\theoremstyle{definition}
\newtheorem{definition}{Definition}
\usepackage{algorithm}
\usepackage{algpseudocode}
\usepackage{comment}

\usepackage[normalem]{ulem}
\newcommand{\MF}{\mathcal{M}_{\mathcal{F}}}
\newcommand{\leqF}{\leq_{\mathcal{F}}}

\begin{document}
	
	\title{Undecidability in resource theory: can you tell theories apart?}
	\author{Matteo Scandi}\thanks{\href{mailto:matteo.scandi@icfo.eu}{Matteo.Scandi@icfo.eu}}
	\affiliation{ICFO - Institut de Ciencies Fotoniques, The Barcelona Institute of\\
		Science and Technology, Castelldefels (Barcelona), 08860, Spain}
	\author{Jacopo Surace}\thanks{\href{mailto:jacopo.surace@icfo.eu}{Jacopo.Surace@icfo.eu}}
	\affiliation{ICFO - Institut de Ciencies Fotoniques, The Barcelona Institute of\\
		Science and Technology, Castelldefels (Barcelona), 08860, Spain}
	\date{\today}
	
	\begin{abstract}

		A central question in resource theory is whether one can construct a set of monotones that completely characterise the allowed transitions dictated by a set of free operations.  A similar question is whether two distinct sets of free operations generate the same class of transitions. 
		These questions are part of the more general problem of whether it is possible to pass from one characterisation of a resource theory to another.
		 In the present letter we prove that in the context of quantum resource theories this class of problems is undecidable in general. This is done by proving the undecidability of the membership problem for CPTP maps, which subsumes all the other results.


	\end{abstract}
	\maketitle
	
	\section{Introduction}
	
	The aim of resource theory is to characterise the possibility of action of an agent who acts under some kind of operational restriction~\cite{chitambarQuantumResourceTheories2019}. To this end, one specifies a set of transformations that the agent can freely carry out, and asks general questions about its capabilities when it is assumed that the allowed operations can be composed repeatedly in any arbitrary order. 
	
	The paradigmatic example of a resource theory is the one of LOCC (local operations and classical communication)~\cite{horodeckiQuantumEntanglement2009}. In this case, two agents who can access two different halves of a shared quantum system are allowed to operate only through local transformations and by sharing classical information between them. It is a non-trivial fact that only by composing operations from the LOCC set, a quantum state can be perfectly teleported from one agent to the other~\cite{bennettTeleportingUnknownQuantum1993}. 
	
	The example above shows that being able to identify whether a transformation is part of a given resource theory is an issue of practical relevance, which can give quite surprising results. It should be noticed, though, that the membership of the quantum teleportation to the LOCC set  is proven by explicitly presenting a protocol to implement it. This kind of proofs require a certain amount of ingenuity and the \emph{ad hoc} constructions used do not help identifying generic members of a set. For this reason, the question of whether there is a general way to certify the membership of a transformation to a set of free operations remains open.
	
	A dual perspective about resource theories is given by focusing on states rather than operations. In this context, one assigns to each system a series of labels that quantify how useful the state is. The paradigmatic example is given by entanglement, the resource for LOCC operations. As a matter of fact, most of the non trivial protocols that can be carried out within LOCC (among these, the quantum teleportation protocol described above) are possible only by the use of entangled states. For this reason, it is also of practical importance to assess the value of a state within a resource theory. 
	
	The standard approach is to define a set of functions, called monotones, which do not increase under the application of free operations. In this way, one can estimate the resourcefulness of a state by looking at a family of numerical labels. 
	
	Hence, there exists two possible natural characterisations of a resource theory: in one case, a description of the allowed operations is provided; in the other, a set of value functions, the monotones, are specified. It is natural then to ask whether there is a constructive way to pass from one description to the other. This is equivalent to asking whether it is possible to pass from the description about the operational capabilities of an agent to the determination of the resourcefulness of a state.
	
	We prove here that both of the problems raised above are undecidable for general resource theories of quantum operations: namely we prove that there is no algorithm that decides whether a generic transformation is generated by a set of free operations, and we show that this implies that  there is no algorithmic means of constructing from a set of free operations a set of monotones describing the same resource theory. It also follows from the undecidability of the membership problem that it is impossible to certify whether a transition is part of a resource theory. Moreover, it is shown that given two sets of free operations it is impossible to tell whether they describe the same set of transitions. These negative results hint at the reason why finding a complete set of monotones is usually a difficult task for many particular resource theories.
	
	\section{Definitions}
	
	In this section we provide the main definitions of the objects treated in the rest of the paper. 
	
	Given a set $\mathcal{S}$ and an associative binary operation on it, the semigroup $\mathcal{S}^*$ is defined as the union of all the finite compositions of elements from $\mathcal{S}$. In other words, $\mathcal{S}$ is the generating set of $\mathcal{S}^*$.

	\begin{definition}[Resource theory]
		Given a set of operations $\mathcal{F}$ which contains the identity, the semigroup $\mathcal{F}^*$ characterises a resource theory. The elements of $\mathcal{F}^*$ are called free operations.
	\end{definition}
	
	
	In the following, we only consider the case in which the elements of $\mathcal{F}$ are completely positive trace preserving maps (CPTP) acting on density operators $\mathcal{D}$ and composing in the usual sense of composition of maps.

	Free operations naturally induce a partial order on the state space:
	
	\begin{definition}[Partial order induced by $\mathcal{F}^*$]
	Given a set of free operations $\mathcal{F}^*$ and two states $\rho,\sigma \in \mathcal{D}$, we say that $\sigma \leqF \rho$, if there exists a $\phi \in \mathcal{F}^*$ such that $\phi(\rho)=\sigma$.
	\end{definition}
	
    Furthermore, if two states can be transformed into  one another through free operations, the two states are  indistinguishable to the resource theory. This motivates the following definition. 
	
	\begin{definition}[Quotient space]
		Any resource theory naturally induces an equivalence relation on the space of states as:
		\begin{align}
			\label{eq:equivalence}
			\rho\simeq_{\mathcal{F}} \sigma \iff \sigma \leqF \rho \mbox{ } \land \mbox{ } \rho \leqF \sigma.
		\end{align}
		The natural space in which the resource theory is defined is  $\mathcal{D}/\simeq_{\mathcal{F}} $.
	\end{definition}
	
	In order to quantify the value of a state, one introduces functions which cannot increase under free operations. 
	
	\begin{definition}[Monotones and compatibility]
		A function is called monotone (w.r.t. a resource theory) when it does not increase under the action of the free operations. Respectively, an operation $\phi$ is said to be compatible with a set of monotones $\mathcal{M}$ if $\forall f \in \mathcal{M}, \forall \rho \in \mathcal{D} \mbox{ } f(\phi(\rho))\leq f(\rho)$. 
	\end{definition}

	\begin{definition}[Complete set of monotones]		
		A set of monotones $\mathcal{M}_{\mathcal{F}}$ is said to be complete (w.r.t $\mathcal{F}^*$) if  
		\begin{align}
			\label{def:complete-monotones} 
			\forall f \in \MF, \forall \rho,\sigma \in \mathcal{D}, f\left(\sigma\right) \leq f(\rho)  \iff \sigma \leqF \rho
		\end{align}
		That is, $\MF$ and $\mathcal{F}^*$ identifies the same partial order structure on $\mathcal{D}$.
	\end{definition}

	\section{Main results}
	
	In this section we prove that deciding whether a transition is present in a resource theory is undecidable (Corollary~\ref{lemma:membershipMonoids}), which implies that one cannot tell whether two resource theories presented in terms of free operations are the same (Corollary~\ref{lemma:distinguishability}). Moreover, we also show that it is impossible to algorithmically construct from a set of free operations $\mathcal{F}^*$ a set of monotones $\mathcal{M}$ which describes the same resource theory (Corollary~\ref{thm:resourcevsfreeoperations}). These facts are all consequences of the following:
	\begin{theorem}\label{thm:undecidability}
		The membership problem for semigroups of CPTP maps is undecidable.
	\end{theorem}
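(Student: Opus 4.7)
The plan is to prove Theorem~\ref{thm:undecidability} by reduction from Mikhailova's theorem on the subgroup membership problem in $F_2\times F_2$: there exists a finitely generated subgroup $H\leq F_2\times F_2$ for which, given a word $w$, deciding whether $w\in H$ is undecidable. To convert this subgroup question into a semigroup question, I would take a symmetric generating set $S=\{h_1^{\pm 1},\ldots,h_n^{\pm 1}\}$ of $H$; since $e=h_1 h_1^{-1}$ already lies in $S^{*}$, the semigroup generated by $S$ equals $H$ in full, and membership in $S^{*}$ remains undecidable.

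The next step is to construct an effective, faithful embedding $F_2\times F_2\hookrightarrow SU(4)$. Two generic rotations in $SU(2)$ with algebraic entries are classically known to generate a free subgroup of rank two, providing a computable embedding $F_2\hookrightarrow SU(2)$; tensoring two independent such embeddings yields $F_2\times F_2\hookrightarrow SU(2)\otimes SU(2)\subseteq SU(4)$. Composing with the unitary-channel homomorphism $U\mapsto \Phi_U$, where $\Phi_U(\rho):=U\rho U^\dagger$, produces a semigroup homomorphism $\tilde\iota:F_2\times F_2\to \mathrm{CPTP}(\mathbb{C}^4)$. The image $\mathcal{F}:=\tilde\iota(S)\cup\{\mathrm{id}\}$ is then a finite, explicitly presentable set of CPTP maps containing the identity, and $\mathcal{F}^{*}=\tilde\iota(H)$.

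With this in hand the reduction is immediate: given any word $w\in F_2\times F_2$, one computes $\Phi:=\tilde\iota(w)$ effectively; by injectivity of $\tilde\iota$, one has $\Phi\in\mathcal{F}^{*}$ if and only if $w\in H$. An algorithm deciding CPTP-semigroup membership would therefore decide subgroup membership in $F_2\times F_2$, contradicting Mikhailova's theorem, which establishes the claim.

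The main obstacle is ensuring injectivity of $\tilde\iota$, since the unitary-channel map has kernel $\{\pm\mathbbm{1}\}$. One must choose the two $SU(2)$ generators so that the resulting subgroup of $SU(4)$, after projection through the $\pm 1$ quotient, still realises $F_2\times F_2$ faithfully. This is a mild genericity condition, generically satisfied by algebraic pairs, and verifying it explicitly is the only subtle technical point needed to make the reduction effective and to transfer the undecidability cleanly to the CPTP setting.
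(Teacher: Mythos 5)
Your reduction is correct, but it takes a genuinely different route from the paper. The paper reduces from the Post Correspondence Problem: it uses a pair of rotations generating a free \emph{semigroup} (\'Swierczkowski) to encode binary words, builds block-diagonal $4\times 4$ unitaries $h_{a_i},g_{a_i}$ that carry the two PCP homomorphisms plus an index-tracking block, and mixes each unitary conjugation with depolarizing noise so that membership of the single target map $\psi(\rho)=\lambda\rho+(1-\lambda)\id/4$ in $\mathcal{F}^*$ is equivalent to solvability of the PCP instance (the generating set varies with the instance, the target is fixed). You instead reduce from Mikhailova's theorem on subgroup membership in $F_2\times F_2$, embed $F_2\times F_2$ faithfully into $SU(4)$ by tensoring two free rotation subgroups of $SU(2)$, and push forward to unitary channels; here the generating set is fixed once and for all and the target $\tilde\iota(w)$ varies. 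Both routes work. One remark: the ``subtle technical point'' you flag (injectivity after quotienting by phases) is in fact automatic rather than a genericity condition --- if $\Phi_{\iota(w)}=\mathrm{id}$ then $\phi_1(w_1)=a\id$, $\phi_2(w_2)=b\id$ with $a,b=\pm1$, and $-\id$ is an element of order two, which a free (hence torsion-free) subgroup of $SU(2)$ cannot contain, so $w=e$; no extra choice of generators is needed. As for what each approach buys: yours proves undecidability already for a \emph{finite}, fixed set of purely unitary (reversible) channels, which is in that respect a cleaner and stronger statement, at the cost of invoking Mikhailova's theorem (which itself rests on the undecidability of the word problem for finitely presented groups). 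The paper's construction needs only the more elementary PCP, and its deliberately noisy, non-unitary maps are what make the follow-up Corollary~\ref{lemma:membershipMonoids} immediate: the state $\psi(\rho)$ differs from $\rho$ and is reachable iff $\psi\in\mathcal{F}^*$, whereas with unitary channels alone state-to-state reachability would require a separate argument (unitary orbits preserve the spectrum, so the reachability question degenerates). So your proof establishes Theorem~\ref{thm:undecidability} as stated, but the paper's corollaries would need to be re-derived if one adopted your construction throughout.
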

	
	\begin{proof}
		The main arguments of the proof are inspired by~\cite{bellMembershipInvertibleDiagonal2007, bellReachabilityProblemsQuaternion2008}. Define two generic matrices in SU$(2)$:
		\begin{align}
			A = e^{i \theta \vec{n}\cdot\vec{\sigma}}, \qquad B = e^{i \theta \vec{m}\cdot\vec{\sigma}}
		\end{align}
		where $\vec{n}$ and $\vec{m}$ are vectors in $\mathbb{R}^3$ and $\vec{\sigma}:= \{\sigma_x,\sigma_y,\sigma_z\}$ is a vector of Pauli matrices. It was proven in~\cite{swierczkowskiClassFreeRotation1994} that any pair $\{A,B\}$ of this form generates a free semigroup whenever  $\vec{n}\cdot \vec{m} = 0$  and $\cos\theta \in {\mathbb{Q}\backslash\{0,\pm1,\pm\frac{1}{2}\}}$. A semigroup of two elements is called free if there is a bijection between its elements and binary strings or, equivalently, if there is no finite composition of its elements that gives the identity. We can then use the two matrices above to encode words in $\{0,1\}^*$, where the star indicates arbitrary finite juxtaposition of the letters in a given set. Define $\gamma$ as the homomorphism that assigns to each binary word the corresponding element in the semigroup $\{A,B\}^*$. In other words, $\gamma$ operates on binary strings by substituting to each $0$ an $A$ and to each $1$ a $B$, and the juxtaposition of letters is mapped to matrix multiplication (e.g., $\gamma(010) = ABA$). Finally, we also need to define a matrix $C$ not in $\{A,B\}^*$, which can be easily done by choosing a unitary that squares to the identity. Since $\{A,B\}^*$ is free, by construction $C\notin \{A,B\}^*$.
		
		\begin{figure}
			\centering
			\includegraphics[width=0.9\linewidth]{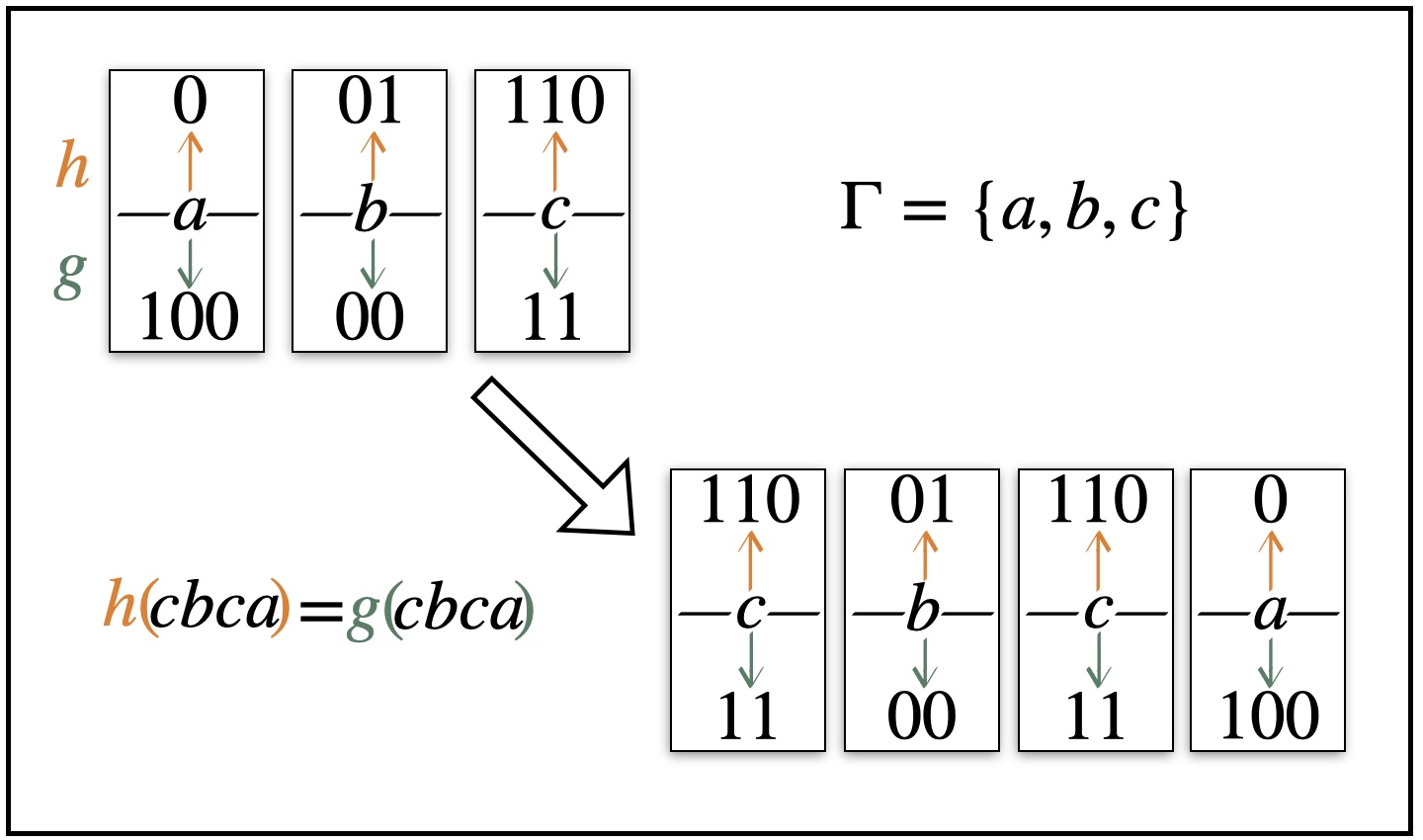}
			\caption{The PCP can be expressed in terms of dominoes: given the two homomorphisms $h$ and $g$ from $\Gamma^*$ to $\{0,1\}^*$, we assign to each letter $x\in\Gamma$ a tile which has in the upper half $h(x)$ written out, and in the lower half $g(x)$. In the example above, $h(a) = 0$ and $g(a)=100$, so that the $a$-domino is constructed accordingly. Since $h$ is an homomorphism (i.e., $h(xy)=h(x)h(y)$), words in $\Gamma^*$ can be represented by juxtaposing the dominoes corresponding to each letter in the word. The PCP problem then translates to the question whether one can find a sequence of dominoes that makes the binary word appearing on top equal to the one on the bottom.}
			\label{fig:figpcp}
		\end{figure}

		In order to prove the theorem we use a reduction to the Post Correspondence Problem (PCP). Given two different homomorphisms $h$ and $g$ from the finite alphabet $\Gamma$ to $\{0,1\}^*$, this is the problem of deciding whether there is a non-empty word $w\in\Gamma^*$ such that $h(w) = g(w)$. It is a classical result from computability theory that the PCP is undecidable~\cite{postVariantRecursivelyUnsolvable1946, sipser2012introduction}. Moreover, if one restricts the problem to words whose first character is fixed, this is also undecidable. The idea of the proof is to show that if the statement of the theorem were decidable, then the PCP would be as well, generating a contradiction. 
		
		First, for each letter $a_i\in\Gamma$ define the two unitary matrices:
		\begin{align}
			&h_{a_i} =  \begin{pmatrix}
				\gamma(h(a_i)) & 0_2\\
				0_2 & A^{i} B
			\end{pmatrix}\,,\;\; g_{a_i} =  \begin{pmatrix}
			\gamma(g(a_i))^\dagger & 0_2\\
			0_2 &   (A^{i})^\dagger B^\dagger
		\end{pmatrix}
		\end{align}
		Since both $\gamma$ and $h$ are homomorphisms, the matrices $h_{a_i}$ compose as $h_{a_i}h_{a_j} = h_{a_i a_j}$ (similarly for $g_{a_i}$). These matrices are constructed in such a way  that the first diagonal block encodes the image of the letter $a_i$ under the homomorphism, while the bottom half is used to keep track of its index. For technical reasons that will become clear in a moment, we also add an extra matrix $\tilde{g}_{a_1}$ giving an alternative encoding of $a_1$, having $\gamma(g(a_1))^\dagger C$ in the upper left corner, and everything else equal to $g_{a_1}$. Moreover, we also introduce two extra unitaries, which will keep track of the beginning and the end of the string, given by:
		\begin{align}
			s =  \begin{pmatrix}
				C & 0_2\\
				0_2 &  B
			\end{pmatrix}\,,\;\; f =  \begin{pmatrix}
				\id_2 & 0_2\\
				0_2 &   B^\dagger
			\end{pmatrix}\,.
		\end{align}
		
		We can then study the resource theory generated by the set $\mathcal{F} := \{\id, H^\lambda_{a_i}, G^\lambda_{a_i},S, F\}_{a_i\in\Gamma, \lambda\in(0,1)}$, where we defined the following maps:
		\begin{align}			\label{eq:EncodingPCP}
			\begin{cases}
				H^\lambda_{a_i}(\rho):=\lambda\,h_{a_i}\rho\,h_{a_i}^\dagger + (1-\lambda) \frac{\id}{4}\\
				G^\lambda_{a_i}(\rho):=\lambda\,g_{a_i}\rho\,g_{a_i}^\dagger + (1-\lambda) \frac{\id}{4}\\
				S(\rho):=\tilde{\lambda}\, s\,\rho\,s^\dagger +(1-\tilde{\lambda}) \frac{\id}{4}\\
				F(\rho):=\tilde{\lambda}\,f\,\rho\,f^\dagger + (1-\tilde{\lambda})\frac{\id}{4}\,,
			\end{cases}
		\end{align}  
		for some fixed $\tilde{\lambda}$.
		The composition in this case behaves as $H^{\lambda_1}_{a_i}H^{\lambda_2}_{a_j} = H^{\lambda_1\lambda_2}_{a_i a_j}$, and similarly with the other elements. In this way, one can encode words from $\Gamma^*$  into operations in $\mathcal{F}^*$ constructed by composing either only $H^\lambda_{a_i}$s or $G^\lambda_{a_i}$s.
		
		We are now ready to prove the claim. Consider the operation:
		\begin{align}\label{eq:rhodef}
			\psi(\rho):=\lambda \,\rho  + (1-\lambda) \frac{\id}{4}\,
		\end{align}
		with $\lambda\in (\tilde{\lambda}^3,\tilde{\lambda}^2)$. Deciding whether $\psi\in\mathcal{F}^*$ is equivalent to the PCP. First, notice that the constraint on $\lambda$ forces the total number of $S$ and $F$ to be at most two. In this context, the only way to obtain the identity in the second diagonal block is having compositions of the form $g_{w_n}\dots g_{w_2}g_{w_1} s \,h_{w_1}h_{w_2}\dots h_{w_n} f$ (or any cyclic composition thereof). Moreover, $w_1$ has to coincide with $a_1$, because $\tilde{g}_{a_1}$ is the only matrix containing an instance of $C$, and $C\notin\{A,B\}^*$. Thus, given that the $h$ matrices and the $g$ matrices cluster in two different groups, in order to get the identity in the first diagonal block the following should hold
		\begin{align}
			\gamma(g(w_n))^\dagger\dots \gamma(g(w_1))^\dagger \,\gamma(h(w_1))\dots \gamma(h(w_n)) = \id_2,
		\end{align} 
		where $w_1 = a_1$ (and we use the encoding $\tilde{g}_{a_1}$ for the first character only),
		which is equivalent to the existence of a word $w\in\Gamma^*$ starting with $a_1$ such that $h(w)  = g(w)$. This reduces the PCP to the membership problem for CPTP maps.
	\end{proof}
	
	The theorem just proved has a number of implications. In particular:
	
	\begin{corollary}[Reachability problem]\label{lemma:membershipMonoids}
		Given two states $\rho$ and $\sigma$ and the description of $\mathcal{F}$, it is undecidable whether there exists $\phi\in\mathcal{F}^*$ such that $\phi(\rho) = \sigma$.
	\end{corollary}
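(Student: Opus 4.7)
The plan is to reduce the membership problem for semigroups of CPTP maps, undecidable by Theorem~\ref{thm:undecidability}, to the reachability problem, using the Choi--Jamio{\l}kowski isomorphism to encode maps as states. Given an instance $(\mathcal{F},\psi)$ of the membership problem on a $d$-dimensional system $\mathcal{H}$, I would construct a reachability instance on the doubled Hilbert space $\mathcal{H}\otimes\mathcal{H}$ as follows: let $\rho = \ketbra{\Phi^+}{\Phi^+}$ with $\ket{\Phi^+} = \tfrac{1}{\sqrt{d}}\sum_i \ket{ii}$, set $\sigma = (\psi\otimes\id)(\rho)$ (the normalised Choi state of $\psi$), and take the free set to be $\tilde{\mathcal{F}} = \{\phi\otimes\id : \phi\in\mathcal{F}\}$. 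The identity on the bipartite system is automatically included since $\id\in\mathcal{F}$ by assumption.

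The central step would then be to prove the equivalence: $\psi\in\mathcal{F}^*$ if and only if there exists $\tilde{\phi}\in\tilde{\mathcal{F}}^*$ with $\tilde{\phi}(\rho) = \sigma$. The forward direction is routine: a factorisation $\psi = \phi_n\circ\cdots\circ\phi_1$ with $\phi_i\in\mathcal{F}$ lifts to $(\phi_n\otimes\id)\circ\cdots\circ(\phi_1\otimes\id)$, which sits in $\tilde{\mathcal{F}}^*$ and maps $\rho$ to $\sigma$ by construction. The substantive direction is the converse: since tensor products compose multiplicatively, every element of $\tilde{\mathcal{F}}^*$ has the form $\phi\otimes\id$ for a uniquely determined $\phi\in\mathcal{F}^*$. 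If such a map satisfies $(\phi\otimes\id)(\rho) = (\psi\otimes\id)(\rho)$, then $\phi$ and $\psi$ have identical Choi operators, and the bijectivity of the Choi--Jamio{\l}kowski correspondence forces $\phi = \psi$, whence $\psi\in\mathcal{F}^*$.

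Since the construction of $(\tilde{\mathcal{F}},\rho,\sigma)$ from $(\mathcal{F},\psi)$ is clearly computable --- it amounts to tensoring each generator with the identity channel and evaluating $\psi$ on one half of $\ket{\Phi^+}$ --- any algorithm deciding reachability would also decide CPTP membership, contradicting Theorem~\ref{thm:undecidability}. The delicate point I expect is precisely the converse direction: it relies on the fact that passively carrying along an ancilla is \emph{rigid} enough that reaching the target Choi state $\sigma$ uniquely pins down the first tensor factor. This is a short consequence of the Choi--Jamio{\l}kowski bijection, but it is where the entire content of the reduction is concentrated.
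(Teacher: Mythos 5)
Your reduction is correct, but it is not the route the paper takes. The paper stays on the original four-dimensional system: it keeps the very set $\mathcal{F}$ from Theorem~\ref{thm:undecidability}, picks a state $\rho$, sets $\sigma=\psi(\rho)$, and argues that ``due to the structure of the semigroup'' a transition $\rho\to\sigma$ exists iff $\psi\in\mathcal{F}^*$ --- i.e.\ it exploits the specific form $\phi(\cdot)=\lambda\,u(\cdot)u^\dagger+(1-\lambda)\tfrac{\id}{4}$ of every element of that particular $\mathcal{F}^*$, so that matching the single output $\psi(\rho)$ already pins down the map (note this implicitly requires a non-degenerate choice of $\rho$: for $\rho=\tfrac{\id}{4}$ every free map reaches $\psi(\rho)$, so ``arbitrary'' must be read with some care). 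You instead give a generic many-one reduction from CPTP membership to reachability: double the system, replace $\mathcal{F}$ by $\tilde{\mathcal{F}}=\{\phi\otimes\id\}$, and use the Choi--Jamio{\l}kowski bijection so that reaching the Choi state of $\psi$ from $\ketbra{\Phi^+}{\Phi^+}$ forces the first factor to equal $\psi$. What your version buys is robustness and generality: it works for any $(\mathcal{F},\psi)$, needs no structural analysis of the semigroup and no careful choice of input state, since the maximally entangled input makes the output determine the map exactly (your ``uniquely determined $\phi$'' step is indeed immediate from injectivity of the Choi correspondence, and the construction is clearly computable). What the paper's version buys is that the undecidable reachability instances live in the same resource theory and same dimension as Theorem~\ref{thm:undecidability}, without enlarging the free set or the Hilbert space. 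Both are legitimate proofs of the corollary as stated.
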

	\begin{proof}
		This follows directly from the proof of Theorem~\ref{thm:undecidability}: choose an arbitrary $\rho$ and set $\sigma$  to be $\psi(\rho)$ (i.e, the state defined by the right hand side of Eq.~\eqref{eq:rhodef}). Due to the structure of the semigroup $\mathcal{F}^*$, a transition between the two states is possible if and only if ${\psi\in\mathcal{F}^*}$. Since this is undecidable, the corollary follows.
	\end{proof}
	 
	Moreover, it is also easy to see that:
	
	\begin{corollary}\label{lemma:distinguishability}
		Given two generating sets $\mathcal{F}_1$ and $\mathcal{F}_2$, there is no algorithmic means of deciding whether they describe the same resource theory.
	\end{corollary}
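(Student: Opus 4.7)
The plan is to reduce the reachability/membership problem of Corollary~\ref{lemma:membershipMonoids} to the question of semigroup equality, so that an algorithm for the latter would contradict Theorem~\ref{thm:undecidability}. Given an instance $(\mathcal{F}, \phi)$ of the membership problem, I would consider the two generating sets
\begin{align}
    \mathcal{F}_1 := \mathcal{F}, \qquad \mathcal{F}_2 := \mathcal{F} \cup \{\phi\},
\end{align}
and argue that $\mathcal{F}_1^* = \mathcal{F}_2^*$ if and only if $\phi\in\mathcal{F}^*$.

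For the forward direction, note that $\phi\in\mathcal{F}_2\subseteq\mathcal{F}_2^*$, so if the two semigroups coincide then $\phi\in\mathcal{F}_1^*=\mathcal{F}^*$. For the reverse direction, one inclusion is automatic: since $\mathcal{F}_1\subseteq\mathcal{F}_2$ one always has $\mathcal{F}_1^*\subseteq\mathcal{F}_2^*$. For the other inclusion, if $\phi$ already admits a decomposition $\phi=\phi_{k}\circ\cdots\circ\phi_{1}$ with $\phi_{j}\in\mathcal{F}$, then any finite word in $\mathcal{F}_2$ can be rewritten as a finite word in $\mathcal{F}_1$ by substituting this decomposition wherever $\phi$ appears, yielding $\mathcal{F}_2^*\subseteq\mathcal{F}_1^*$. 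Since the construction of $\mathcal{F}_2$ from $(\mathcal{F},\phi)$ is trivially computable (it amounts to listing one extra generator), this provides a many-one reduction from the membership problem to the equality-of-resource-theories problem, so a decision procedure for the latter would yield one for the former, contradicting Theorem~\ref{thm:undecidability}.

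I do not anticipate a genuine obstacle here; the corollary really is a one-line consequence of Theorem~\ref{thm:undecidability}. The only small point worth checking is that no tacit assumption in the definition of a resource theory is violated by $\mathcal{F}_2$: the identity is still in $\mathcal{F}_2\supseteq\mathcal{F}_1\ni\id$, and $\phi$ being CPTP ensures $\mathcal{F}_2$ is still a set of CPTP maps, so $\mathcal{F}_2$ is itself a legitimate generating set of a resource theory in the sense of the definitions in the previous section.
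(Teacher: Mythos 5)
Your construction is the same as the paper's: adjoin the critical map to the generating set and observe that equality of the two resource theories encodes membership of that map in $\mathcal{F}^*$; the argument as you give it is correct. The only divergence is in what ``describe the same resource theory'' is taken to mean. You read it as equality of the generated semigroups, and your biconditional $\mathcal{F}_1^*=\mathcal{F}_2^*\iff\phi\in\mathcal{F}^*$ reduces directly from Theorem~\ref{thm:undecidability}. The paper instead reads it as inducing the same set of transitions (the same relation $\leq_{\mathcal{F}}$ on states), which is a coarser notion: two distinct semigroups could in principle generate the same reachability relation, so under that reading the forward direction of your biconditional is not automatic. The paper supplies the missing step by routing through Corollary~\ref{lemma:membershipMonoids}: for the specific $\mathcal{F}$ and $\psi$ of Theorem~\ref{thm:undecidability}, achievability of the transition $\rho\rightarrow\psi(\rho)$ under $\mathcal{F}^*$ already forces $\psi\in\mathcal{F}^*$. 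Thus your proof settles the statement under the semigroup-equality reading (which Definition~1 supports), while the paper's detour through reachability settles the transition-level reading emphasised in its conclusions; to cover the latter you would only need to add that one step.
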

	\begin{proof}
		Take as $\mathcal{F}_1$ the set $\mathcal{F}$ defined in Theorem~\ref{thm:undecidability} and as $\mathcal{F}_2 := \mathcal{F}\cup \{\psi\}$. The transition $\rho\rightarrow\psi(\rho)$ is trivially present in $\mathcal{F}_2$. From Corollary~\ref{lemma:membershipMonoids}, though, it is undecidable to say whether  this transition is generated by $\mathcal{F}_1$. Hence, deciding whether two different sets of free operations describe the same set of transitions is impossible in general.
	\end{proof}

	\begin{figure}
		\centering
		\includegraphics[width=0.9\linewidth]{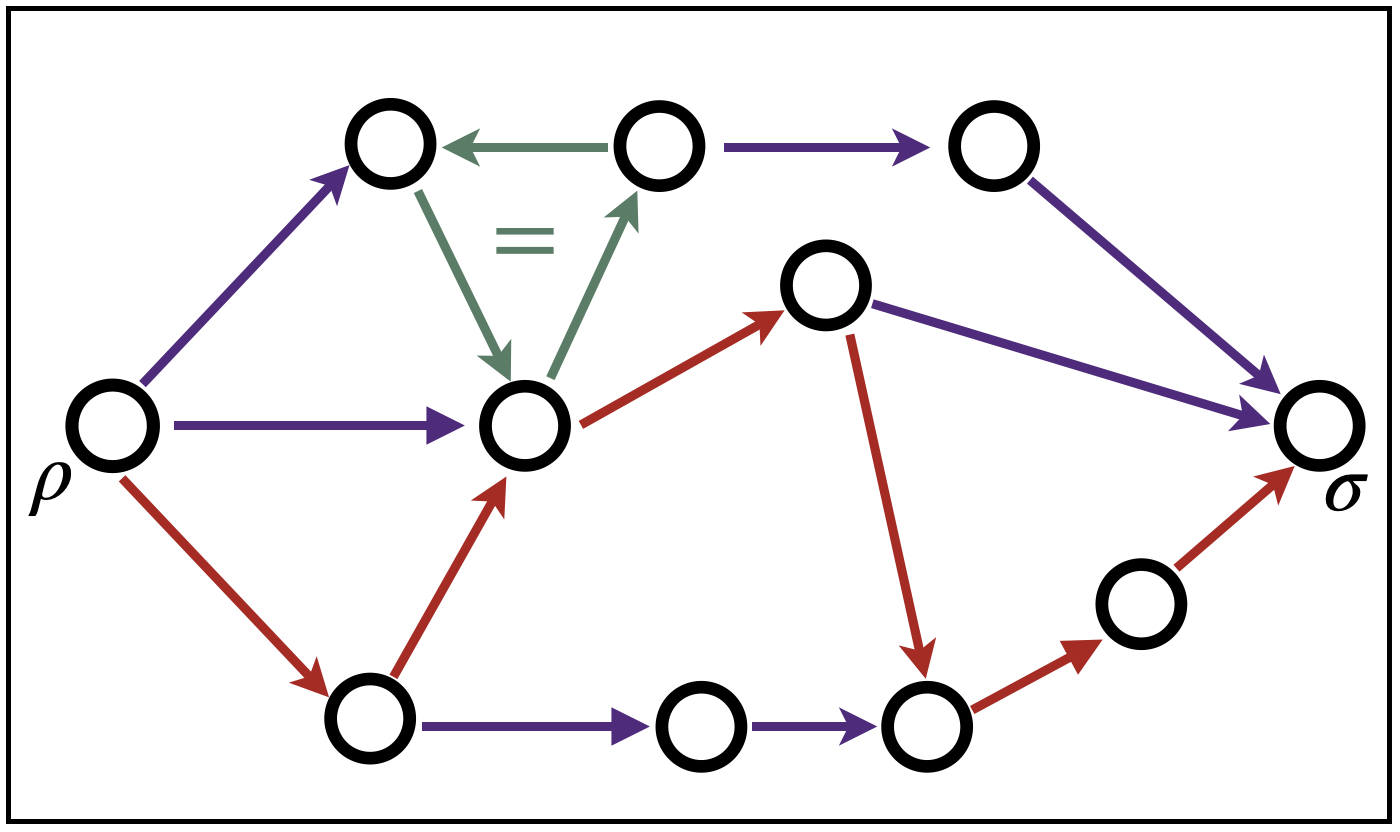}
		\caption{Graphical representation of the action of a resource theory $\mathcal{F}^*$ on the state $\rho$. Each vertex represents a state, and the presence of an edge from the vertex $\rho_1$ to $\rho_2$ corresponds to the existence of a transformation $\phi\in\mathcal{F}^*$ such that $\phi(\rho_1) = \rho_2$. It should be noticed that in order to keep the graph clean, we omitted many edges in the figure. For example, there should be a line directly connecting $\rho$ to $\sigma$, since $\mathcal{F}^*$  contains all the compositions. The green arrows show that equivalent states appear in the graph in the form of a cycle. Collapsing all the cycles to a representative vertex makes the graph acyclic. The red line is the longest trajectory between $\rho$ and $\sigma$, so that $f_{\rho}^\mathcal{F}(\sigma) = \frac{1}{7}$.
		}
		\label{fig:plot1}
	\end{figure}
	
	Before passing to prove Corollary~\ref{thm:resourcevsfreeoperations}, we present the following:
	
	\begin{lemma}\label{lemma:completeMonotones}
		For any $\mathcal{F}^*$ there exists a complete set of monotones.
	\end{lemma}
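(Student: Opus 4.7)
The plan is to exhibit a complete family of monotones by indexing one per state. For each $\tau \in \mathcal{D}$, set $f_\tau(\rho) := 1$ when $\tau \leqF \rho$ and $f_\tau(\rho) := 0$ otherwise, and take $\MF := \{f_\tau : \tau \in \mathcal{D}\}$.

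The first step is to check that each $f_\tau$ is a monotone. Suppose $\sigma \leqF \rho$, witnessed by some $\phi \in \mathcal{F}^*$ with $\phi(\rho) = \sigma$. If $f_\tau(\sigma) = 1$, then $\tau = \phi'(\sigma)$ for some $\phi' \in \mathcal{F}^*$, and since $\mathcal{F}^*$ is closed under composition, $\tau = (\phi' \circ \phi)(\rho)$, so $f_\tau(\rho) = 1 \geq f_\tau(\sigma)$. The case $f_\tau(\sigma) = 0$ is vacuous.

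The second step is completeness. I would specialise $\tau = \sigma$: because $\id \in \mathcal{F}$ gives reflexivity $\sigma \leqF \sigma$, one has $f_\sigma(\sigma) = 1$, so the hypothesis $f_\sigma(\sigma) \leq f_\sigma(\rho)$ forces $f_\sigma(\rho) = 1$, which by definition means $\sigma \leqF \rho$. The converse direction, that $\sigma \leqF \rho$ implies $f(\sigma) \leq f(\rho)$ for every $f \in \MF$, is exactly the monotonicity just established.

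No serious obstacle is expected: the statement is essentially true by construction. The only interpretive point is that the completeness condition must be read as referring to the conjunction over all monotones in $\MF$ (which is what the clarifying sentence ``$\MF$ and $\mathcal{F}^*$ identify the same partial order structure'' conveys) rather than to each monotone individually. A more refined real-valued family, such as the one hinted at in Figure~\ref{fig:plot1} based on longest-chain lengths in the DAG obtained from $\simeq_{\mathcal{F}}$, could replace the indicator family, but is not needed for the existence claim.
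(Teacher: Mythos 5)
Your proof is correct, but it takes a genuinely different (and more elementary) route than the paper. You use the binary reachability indicators $f_\tau(\rho)=1$ iff $\tau\leqF\rho$, indexed by states $\tau\in\mathcal{D}$; monotonicity is closure of $\mathcal{F}^*$ under composition, and completeness follows by testing $f_\sigma$ and using $\id\in\mathcal{F}$ for reflexivity. The paper instead builds, for each $\rho$, a directed graph of states reachable from $\rho$, passes to the quotient $\mathcal{D}/\simeq_{\mathcal{F}}$ to make it acyclic, assigns lengths to edges (length one for indecomposable maps, the parameter $t$ for elements of continuous one-parameter semigroups, sums for compositions), and sets $f_\rho^{\mathcal{F}}(\sigma)=\tfrac{1}{\ell+1}$ with $\ell$ the longest path from $\rho$ to $\sigma$, and $2$ for unreachable states. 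Your construction buys brevity and robustness: it sidesteps questions the paper's proof leaves implicit, such as whether the longest path is well defined or finite and how edge lengths behave under reparametrisation of continuous families, and it is automatically constant on equivalence classes. The paper's construction buys a graded, real-valued quantification of ``how far'' a state lies along free trajectories, which is conceptually closer to familiar resource monotones but is not needed for the bare existence claim, nor for the use made of this lemma in Corollary~\ref{thm:resourcevsfreeoperations}. Your interpretive remark is also apt: completeness must be read as the conjunction over all monotones in $\MF$ characterising $\leqF$, which is the reading the paper itself adopts.
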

	\begin{proof}
		It is natural to first define a set of monotones  $\mathcal{M}$ on the quotient space $\mathcal{D}/\simeq_{\mathcal{F}}$ and to just later extend it to the whole space of density matrices. 
		
		The idea of the proof is to assign to each state $\rho\in\mathcal{D}$ a directed graph corresponding to all the possible states reachable from $\rho$ through arbitrary applications of elements of $\mathcal{F}^*$. A graphical depiction of how this looks like is presented in the first panel of Fig.~\ref{fig:plot1}. Each vertex in the figure corresponds to a state, and the presence of an arrow from the vertex $\rho_1$ to $\rho_2$ corresponds to the existence of a transformation $\phi\in\mathcal{F}^*$ such that $\phi(\rho_1) = \rho_2$. Passing from $\mathcal{D}$ to the quotient space $\mathcal{D}/\simeq_{\mathcal{F}}$ makes the graph acyclic, since all the vertices in the same equivalence class (like the ones connected by the green lines in the figure) collapse to a single point. 
		
		In order to define monotones on $\mathcal{D}/\simeq_{\mathcal{F}}$, we then only have to focus on acyclic directed graphs. Moreover, one has to assign to each continuous semigroups  ${\{\Phi_t\}_{t\in\R}}\subset\mathcal{F}^*$ a generator $\mathcal{L}$, i.e., a map satisfying $\Phi_t =: e^{t\mathcal{L}}$. At this point, a length can be assigned to each edge in the graph: if the transformation corresponding to an edge cannot be further decomposed inside $\mathcal{F}^*$, then we assign a length one; if the transformation is part of a semigroup generated by a single element $\mathcal{L}$, the length is given by the parameter $t$; finally, if an edge corresponds to multiple transformations, the length is inductively defined as the sum of the lengths. We can then assign to each  state $\rho\in\mathcal{D}_C$ a monotone $f_{\rho}^\mathcal{F}$ in the following way: (i) if a state $\sigma$ cannot be reached from $\rho$, then $f_{\rho}^\mathcal{F}(\sigma):=2$; (ii) if $\sigma$ is part of at least one of the trajectories stemming from $\rho$,  then assign the value $f_{\rho}^\mathcal{F}(\sigma):=\frac{1}{\ell+1}$, where $\ell$ is the length of the longest path from $\rho$ to $\sigma$. Finally, $f_{\rho}^\mathcal{F}$ can be extended to the whole space of density matrices, by assigning to two representative of the same equivalence class $\sigma_1\simeq_{\mathcal{F}}\sigma_2$ the same value, $f_{\rho}^\mathcal{F}(\sigma_1) \equiv f_{\rho}^\mathcal{F}(\sigma_2)$. 
		
		In order to prove that the set $\mathcal{M} = \{f_{\rho}^\mathcal{F}\}_{\rho\in\mathcal{D}}$ is complete, we first need to prove that it is compatible. This holds by construction. 	In fact, suppose that there exists a $f_{\rho}^\mathcal{F}$, a $\sigma$ and a $\phi\in\mathcal{F}^*$ such that $f_{\rho}^\mathcal{F}(\phi(\sigma))> f_{\rho}^\mathcal{F}(\sigma) := \frac{1}{s+1}$.  The longest trajectory from $\rho$ to $\phi(\sigma)$ is long at least $s+\varepsilon$ for some positive $\varepsilon$, implying that $f_{\rho}^\mathcal{F}(\phi(\sigma))\leq \frac{1}{s+\varepsilon+1}<\frac{1}{s+1}$, which gives the desired contradiction. Similarly, it also follows  by construction that the set is complete. Assume that for two given states $\rho$ and $\sigma $ and $\forall f \in \mathcal{M}$ one has $f(\sigma)\leq f(\rho)$. This directly implies that $f_\rho^\mathcal{F}(\sigma)\leq f_\rho^\mathcal{F}(\rho) = 1$, so by definition there exists a trajectory from $\rho$ to $\sigma$ or, in other words, a $\phi\in\mathcal{F}^*$ such that $\phi(\rho) = \sigma$. This concludes the proof.
	\end{proof}
	
	Lemma~\ref{lemma:completeMonotones} shows that monotones can be as powerful in constraining a resource theory as the usual characterisation in terms of free operations. Since we are just interested about the possibility in principle, we constructed an overcomplete set $\mathcal{M}$, setting aside questions about finding the minimal complete set. It is worth pointing out, though, that in many resource theories the complete set is actually finite: for example, in the resource theory of non-uniformity one only needs $d-1$ monotones, where $d$ is the dimensionality of the Hilbert space~\cite{gourResourceTheoryInformational2015}. Before investigating how one could find a minimal complete set, it is important to understand whether such a  complete set could be found at all. The negative answer to this issue is given by the following:
	
	\begin{corollary}\label{thm:resourcevsfreeoperations}
		Given a set of free operations $\mathcal{F}$ there is no algorithmic means of constructing a recursive complete set of monotones $\mathcal{M}$ associated to it.
	\end{corollary}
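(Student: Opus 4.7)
The plan is to reduce the reachability problem from Corollary~\ref{lemma:membershipMonoids} to the hypothetical monotone-construction algorithm, deriving a contradiction. Suppose for contradiction that some algorithm $\mathcal{A}$ takes any generating set $\mathcal{F}$ of CPTP maps and returns a recursive complete set of monotones $\mathcal{M}_\mathcal{F}$. I take ``recursive'' in the minimal sense needed for the statement to be nontrivial: $\mathcal{M}_\mathcal{F}$ is a recursively enumerable family, and each $f \in \mathcal{M}_\mathcal{F}$ is a computable real-valued function, in the sense that its value on a computable state can be approximated to any prescribed rational precision.

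To decide whether $\sigma \leqF \rho$ on inputs of the form supplied by Corollary~\ref{lemma:membershipMonoids}---where $\rho$ is, say, a rational density matrix and $\sigma = \psi(\rho)$---I would run two semi-decision procedures in parallel. The first enumerates finite words in the generators of $\mathcal{F}^*$, applies each composition to $\rho$, and halts with the output ``$\sigma \leqF \rho$'' as soon as an exact match with $\sigma$ is detected; this search is effective because the unitaries $A$, $B$, $h_{a_i}$, $g_{a_i}$ used in the proof of Theorem~\ref{thm:undecidability} have algebraic entries, so $\phi(\rho)$ lies in a field with decidable equality. The second procedure runs $\mathcal{A}(\mathcal{F})$, dovetails over the enumerated monotones $f$ and over approximation precisions, and halts with ``$\sigma \not\leqF \rho$'' as soon as the approximations certify a strict inequality $f(\sigma) > f(\rho)$.

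Completeness of $\mathcal{M}_\mathcal{F}$ in the sense of Eq.~\eqref{def:complete-monotones} guarantees that $\sigma \not\leqF \rho$ holds if and only if some $f \in \mathcal{M}_\mathcal{F}$ separates the two states with a strict inequality, and such a strict inequality between computable reals is effectively detectable; conversely, reachability is always witnessed by an explicit finite composition of generators. Thus on every input exactly one of the two procedures halts, so the combined procedure decides reachability, contradicting Corollary~\ref{lemma:membershipMonoids}.

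The main obstacle is conceptual rather than technical: one must pin down the intended notion of ``recursive'' sharply enough for the reduction to go through but weakly enough that the negative conclusion is nontrivial. The complete family $\{f_\rho^\mathcal{F}\}$ built in Lemma~\ref{lemma:completeMonotones} exists set-theoretically for \emph{every} $\mathcal{F}^*$, but its individual monotones are defined via longest paths in the transition graph and are therefore manifestly non-computable whenever the transition relation itself is undecidable---which is exactly the content of Corollary~\ref{lemma:membershipMonoids}. Once computability of the outputs of $\mathcal{A}$ is granted, as it must be for $\mathcal{M}_\mathcal{F}$ to function as an algorithmic certificate of the partial order, the contradiction above is essentially immediate.
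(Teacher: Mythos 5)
Your proposal is correct and follows the same overall strategy as the paper: assume the monotone-constructing algorithm exists and use it to decide the reachability instances of Corollary~\ref{lemma:membershipMonoids}, yielding a contradiction. Where you differ is in how the ``recursive'' hypothesis is exploited. The paper posits an auxiliary algorithm $\mathcal{A}_{\rm cmp}$ that \emph{decides} the universally quantified comparison $\forall f\in\mathcal{M}:\, f(\rho)\geq f(\sigma)$, and handles its possible non-existence with a rather loose case distinction (``then the claim follows trivially''). You instead pin down ``recursive'' as an r.e.\ family of computable monotones and observe that a decision procedure for the comparison is not needed: semi-deciding the negative side (dovetailing over monotones and precisions until a strict separation $f(\sigma)>f(\rho)$ is certified, which completeness guarantees exists exactly when $\sigma \not\leqF \rho$) suffices, since the positive side is independently r.e. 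Run in parallel, the two semi-procedures decide reachability. This is tighter than the paper's argument, which implicitly requires exact comparisons of real-valued monotones and an effective handling of the universal quantifier. One small wrinkle in your write-up: the generating set $\mathcal{F}$ of Theorem~\ref{thm:undecidability} is uncountable ($\lambda$ ranges over $(0,1)$), so ``enumerate finite words in the generators'' is not literally available; for the instances of Corollary~\ref{lemma:membershipMonoids} this is harmless, since reachability there is equivalent to solvability of the underlying PCP instance (or to a witness with rational $\lambda$'s), whose positive side is r.e.\ by direct search, but the patch should be stated.
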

	\begin{proof}
		Lemma~\ref{lemma:completeMonotones} implies the existence of a complete set of monotones $\mathcal{M}$ associated to $\mathcal{F}$. Assume the existence of an algorithm constructing this set. Moreover, also assume  the existence of an algorithm $\mathcal{A}_{\rm cmp}$ which takes two arbitrary states $\rho$ and $\sigma$ as input, and decides whether $\forall f\in\mathcal{M}$, $f(\rho)\geq f(\sigma)$. If such an algorithm did not exist, then the claim would follow trivially, since one wouldn't be able to check the compatibility of $\mathcal{F}$ with $\mathcal{M}$. If such an algorithm existed, then one could decide whether a transition between two arbitrary states is present in $\mathcal{F}^*$ simply by running $\mathcal{A}_{\rm cmp}$. This is in contradiction with Corollary~\ref{lemma:membershipMonoids}. Therefore, there is no general algorithm constructing the complete set $\mathcal{M}$ associated with $\mathcal{F}$.
	\end{proof}
	
	\section{Conclusions}
	
	In the present letter we showed that standard questions in resource theories hide undecidability issues. Consider for example the main problem in this context: whether it is possible to convert a state into another through a series of allowed operations. The undecidability of this question is the content of Corollary~\ref{lemma:membershipMonoids}. This directly implies the impossibility of completely identifying a generic resource theory. In particular, given two different characterisations of a resource theory in terms of two different sets of free operations, it is impossible to tell whether they induce the same set of transitions (Corollary~\ref{lemma:distinguishability}). Finally,  Corollary~\ref{thm:resourcevsfreeoperations} implies that it is also impossible to construct a set of value functions which completely describes the resource theory induced by a set of free operations.
	
	The main theorem supporting these results is the undecidability of the membership problem for semigroups of CPTP maps (Theorem~\ref{thm:undecidability}). It should be noticed that in the proof of this theorem we used a resource theory which, if somewhat artificial, has anyway a  precise physical interpretation: the maps $H_{a_i}^\lambda$ and $G_{a_i}^\lambda$ defined in Eq.~\eqref{eq:EncodingPCP} are rotated depolarising channels, which correspond to the experimental setting in which one can only apply  unitary transformations of the form $h_{a_i}$ or $g_{a_i}$, introducing at each application some quantum noise into the system. 
	
	 Nonetheless, it would be interesting to prove the same theorem within the framework of a more natural resource theory. A promising candidate in this respect is given by  the LOCC set: its structure is notoriously difficult to characterise mathematically~\cite{chitambarEverythingYouAlways2014}, fact that could be explained if one could show that it contains an undecidable set. This challenging possibility would require the devise of a new proof and it is left for future research.
	 
	
	\emph{Acknowledgements.} We especially thank G. Senno and R. Ganardi  for pointing out mistakes in the preliminary and the published version, respectively. We also thank M. Lostaglio and A.~Acín for the useful comments, and A. Hermes-Müller for pointing out the reference~\cite{wolfAreProblemsQuantum2011}, where a slightly stronger version of Corollary~\ref{lemma:membershipMonoids} was proven through  a different reduction of the PCP (Theorem 2 therein). 
	This project has received funding from the European Union’s Horizon 2020 research and innovation programme under the Marie Skłodowska-Curie grant agreement No 713729,
	and from the Government of Spain (FIS2020-TRANQI and Severo Ochoa CEX2019-000910-S), Fundacio Cellex, Fundació Mir-Puig, Generalitat de Catalunya (SGR 1381 and CERCA Programme).
	
	\bibliography{bib.bib}
\end{document}